\newtheorem{theorem}{Theorem}
\newtheorem{conjecture}[theorem]{Conjecture}
\newtheorem{definition}[theorem]{Definition}
\newtheorem{lemma}[theorem]{Lemma}
\newcommand{\A}{\mathcal{A}}
\newcommand{\B}{\mathcal{B}}
\newcommand{\R}{\mathbb{R}}
\newcommand{\eps}{\varepsilon}
\newcommand{\fp}{optimizer }
\newcommand{\fpp}{optimizer}
\newcommand{\fps}{optimizer's }
\newcommand{\spl}{learner }
\newcommand{\splp}{learner}
\newcommand{\spls}{learner's }
\newcommand{\vstack}{V}
\DeclareMathOperator{\argmax}{argmax}
\DeclareMathOperator{\E}{\mathbb{E}}
\def \Alg  {{\sf Alg}}
\def \Reg  {{\sf Reg}}
\begin{document}

\title{Strategizing against No-regret Learners}
\author{
  Yuan Deng \\
  Duke University\\
  \texttt{ericdy@cs.duke.edu} \\
  \And
  Jon Schneider \\
  Google Research\\
  \texttt{jschnei@google.com} \\
  \And
  Balasubramanian Sivan \\
  Google Research\\
  \texttt{balusivan@google.com}
}

\maketitle
\begin{abstract}
How should a player who repeatedly plays a game against a no-regret learner strategize to maximize his utility? We study this question and show that under some mild assumptions, the player can always guarantee himself a utility of at least what he would get in a Stackelberg equilibrium of the game. When the no-regret learner has only two actions, we show that the player cannot get any higher utility than the Stackelberg equilibrium utility. But when the no-regret learner has more than two actions and plays a mean-based no-regret strategy, we show that the player can get strictly higher than the Stackelberg equilibrium utility. We provide a characterization of the optimal game-play for the player against a mean-based no-regret learner as a solution to a control problem. When the no-regret learner's strategy also guarantees him a no-swap regret, we show that the player cannot get anything higher than a Stackelberg equilibrium utility.

\end{abstract}

\section{Introduction}
Consider a two player bimatrix game with a finite number of actions for each player repeated over $T$ rounds. When playing a repeated game, a widely adopted strategy is to employ a \emph{no-regret learning algorithm}: a strategy that guarantees the player that in hindsight no single action when played throughout the game would have performed significantly better. Knowing that one of the players (the \emph{\splp}) is playing a no-regret learning strategy, what is the optimal gameplay for the other player (the \emph{\fpp})? This question is the focus of our work.

If this were a single-shot strategic game where learning is not relevant, a (pure or mixed strategy) Nash equilibrium is a reasonable prediction of the game’s outcome. In the $T$ rounds game with learning, can the \fp guarantee himself a per-round utility of at least what he could get in a single-shot game? Is it possible to get significantly more utility than this? Does this utility depend on the specific choice of learning algorithm of the learner? What gameplay the \fp should adopt to achieve maximal utility? None of these questions are straightforward, and indeed none of these have unconditional answers.


\paragraph{Our results.} 

Central to our results is the idea of the Stackelberg equilibrium of the underlying game. The Stackelberg variant of our game is a single-shot two-stage game where the \fp is the first player and can publicly commit to a mixed strategy; the \spl then best responds to this strategy. The Stackelberg equilibrium is the resulting equilibrium of this game when both players play optimally. Note that the \fps utility in the Stackelberg equilibrium is always weakly larger than his utility in any (pure or mixed strategy) Nash equilibrium, and is often strictly larger.

Let $\vstack$ be the utility of the \fp in the Stackelberg equilibrium. With some mild assumptions on the game, we show that the \fp can always guarantee himself a utility of at least $(\vstack - \varepsilon)T - o(T)$ in $T$ rounds for any $\varepsilon > 0$, irrespective of the learning algorithm used by the \spl as long as it has the no-regret guarantee (Theorem~\ref{thm:atleastStack}). This means that if one of the players is a \spl the other player can already profit over the Nash equilibrium regardless of the specifics of the learning algorithm employed or the structure of the game. Further, if any one of the following conditions is true:
\begin{enumerate}[topsep=0pt, noitemsep, leftmargin=*]
	\item the game is a constant-sum game,
	\item the \spls no-regret algorithm has the stronger guarantee of no-swap regret (see Section~\ref{sec:prelim}),
	\item the \spl has only two possible actions in the game,
\end{enumerate}
the \fp cannot get a utility higher than $\vstack T + o(T)$ (see Theorem~\ref{thm:constantsum}, Theorem~\ref{thm:noswap}, Theorem~\ref{thm:twoactions}).

If the learner employs a learning algorithm from a natural class of algorithms called mean-based learning algorithms~\citep{BMSW18} (see Section~\ref{sec:prelim}) that includes popular no-regret algorithms like the Multiplicative Weights algorithm, the Follow-the-Perturbed-Leader algorithm, and the EXP3 algorithm, we show that there exist games where the \fp can guarantee himself a utility $\vstack' T - o(T)$ for some $\vstack' > \vstack$ (see Theorem~\ref{thm:beatingStack}). We note the contrast between the cases of $2$ and $3$ actions for the \splp: in the $2$-actions case even if the \spl plays a mean-based strategy, the \fp cannot get anything more than $\vstack T + o(T)$ (Theorem~\ref{thm:twoactions}), whereas with $3$ actions, there are games where he is able to guarantee a linearly higher utility. 

Given this possibility of exceeding Stackelberg utility, our final result is on the nature and structure of the \emph{utility optimal gameplay} for the \fp against a \spl that employs a mean-based strategy. First, we give a crisp characterization of the \fps asymptotic optimal algorithm as the solution to a control problem (see Section~\ref{sec:geom}) in $N$ dimensions where $N$ is the number of actions for the learner. This characterization is predicated on the fact that just knowing the cumulative historical utilities of each of the learner's actions is essentially enough information to accurately predict the learner's next action in the case of a mean-based learner. These $N$ cumulative utilites thus form an $N$-dimensional ``state'' for the learner which the \fp can manipulate via their choice of action. We then proceed to make multiple observations that simplify the solution space for this control problem. We leave as a very interesting open question of computing or characterizing the optimal solution to this control problem and we further provide one conjecture of a potential characterization.



\paragraph{Comparison to prior work.}
The very recent work of~\cite{BMSW18} is the closest to ours. They study the specific $2$-player game of an auction between a single seller and single buyer. The main difference from~\cite{BMSW18} is that they consider a Bayesian setting where the buyer's type is drawn from a distribution, whereas there is no Bayesian element in our setting. But beyond that the seller's choice of the auction represents his action, and the buyer's bid represents her action. They show that regardless of the specific algorithm used by the buyer, as long as the buyer plays a no-regret learning algorithm the seller can always earn at least the optimal revenue in a single shot auction. Our Theorem~\ref{thm:atleastStack} is a direct generalization of this result to arbitrary games without any structure. Further~\cite{BMSW18} show that there exist no-regret strategies for the buyer that guarantee that the seller cannot get anything better than the single-shot optimal revenue. Our Theorems~\ref{thm:constantsum},~\ref{thm:noswap} and~\ref{thm:twoactions} are both generalizations and refinements of this result, as they pinpoint both the exact \spls strategies and the kind of games that prevent the \fp from going beyond the Stackelberg utility.~\cite{BMSW18} show that when the buyer plays a mean-based strategy, the seller can design an auction to guarantee him a revenue beyond the per round auction revenue. Our control problem can be seen as a rough parallel and generalization of this result.

\paragraph{Other related work.} 
The first notion of regret (without the swap qualification) we use in the paper is also referred to as external-regret (see~\cite{Hannan57},~\cite{FV93},~\cite{LW94},~\cite{FS97},~\cite{FS99},~\cite{CFHHSW97}). The other notion of regret we use is swap regret. There is a slightly weaker notion of regret called internal regret that was defined earlier in~\cite{FV98}, which allows all occurrences of a given action $x$ to be replaced by another action $y$. Many no-internal-regret algorithms have been designed (see for example~\cite{HM00},~\cite{FV97,FV98,FV99},~\cite{CL03}). The stronger notion of swap regret was introduced in~\cite{BM05}, and it allows one to simultaneously swap several pairs of actions.~\citeauthor{BM05} show how to efficiently convert a no-regret algorithm to a no-swap-regret algorithm. One of the reasons behind the importance of internal and swap regret is their close connection to the central notion of correlated equilibrium introduced by~\cite{Aumann74}. In a general $n$ players game, a distribution over action profiles of all the players is a correlated equilibrium if every player has zero internal regret. When all players use algorithms with no-internal-regret guarantees, the time averaged strategies of the players converges to a correlated equilibrium (see~\cite{HM00}). When all players simply use algorithms with no-external-regret guarantees, the time averaged strategies of the players converges to the weaker notion of coarse correlated equilibrium. When the game is a zero-sum game, the time-averaged strategies of players employing no-external-regret dynamics converges to the Nash equilbrium of the game.

On the topic of optimizing against a no-regret-learner,~\cite{ADMS18} study a setting similar to~\cite{BMSW18} but also consider other types of buyer behavior apart from learning, and show to how to robustly optimize against various buyer strategies in an auction.

\section{Model and Preliminaries}
\label{sec:prelim}
\subsection{Games and equilibria}

Throughout this paper, we restrict our attention to simultaneous two-player bimatrix games $G$. We refer to the first player as the \textit{\fpp} and the second player as the \textit{\splp}. We denote the set of actions available to the \fp as $\A = \{a_1, a_2, \dots, a_M\}$ and the set of actions available to the \spl as $\B = \{b_1, b_2, \dots, b_N\}$. If the \fp chooses action $a_i$ and the \spl chooses action $b_j$, then the \fp receives utility $u_{O}(a_i, b_j)$ and the learner receives utility $u_{L}(a_i, b_j)$. We normalize the utility such that $|u_{O}(a_i, b_j)| \leq 1$ and $|u_{L}(a_i, b_j)| \leq 1$. We write $\Delta(\A)$ and $\Delta(\B)$ to denote the set of mixed strategies for the \fp and learner respectively. When the \fp plays $\alpha \in \Delta(\A)$ and the learner plays $\beta \in \Delta(\B)$, the \fps utility is denoted by $u_O(\alpha, \beta) = \sum_{i=1}^M \sum_{j=1}^N \alpha_i \beta_j u_O(a_i, b_j)$, similarly for the learner's utility.

We say that a strategy $b \in \B$ is a best-response to a strategy $\alpha \in \Delta(\A)$ if $b \in \argmax_{b' \in \B} u_{L}(\alpha, b')$. We are now ready to define Stackelberg equilibrium~\citep{von2010market}.

\begin{definition}
The \emph{Stackelberg equilibrium} of a game is a pair of strategies $\alpha \in \Delta(\A)$ and $b \in \B$ that maximizes $u_{O}(\alpha, b)$ under the constraint that $b$ is a best-response to $\alpha$. We call the value $u_{O}(\alpha, b)$ the \emph{Stackelberg value} of the game.
\end{definition}

A game is \textit{zero-sum} if $u_{O}(a_i, b_j) + u_{L}(a_i, b_j) = 0$ for all $i \in [M]$ and $j \in [N]$; likewise, a game is \textit{constant-sum} if $u_{O}(a_i, b_j) + u_{L}(a_i, b_j) = C$ for some fixed constant $C$ for all $i \in [M]$ and $j \in [N]$. Note that for zero-sum or constant-sum games, the Stackelberg equilibrium coincides with the standard notion of Nash equilibrium due to the celebrated minimax theorem~\citep{neumann1928theorie}. Moreover, throughout this paper, we assume that the learner does not have weakly dominated strategies: a strategy $b \in \B$ is weakly dominated if there exists $\beta \in \Delta(\B \setminus \{b\})$ such that for all $a \in \A$, $u_L(a, \beta) \geq u_L(a, b)$.

We are interested in the setting where the \fp and the learner repeatedly play the game $G$ for $T$ rounds. We will denote the \fps action at time $t$ as $a^t$; likewise we will denote the learner's action at time $t$ as $b^t$. Both the \fp and learner's utilities are additive over rounds with no discounting. 

The \fps strategy can be adaptive (i.e. $a^t$ can depend on the previous values of $b^t$) or non-adaptive (in which case it can be expressed as a sequence of mixed strategies $(\alpha^1, \alpha^2, \dots, \alpha^T)$). Unless otherwise specified, all positive results (results guaranteeing the \fp can guarantee some utility) apply for non-adaptive \fpp s and all negative results apply even to adaptive \fpp s. As the name suggests, the learner's (adaptive) strategy will be specified by some variant of a low-regret learning algorithm, as described in the next section.

\subsection{No-regret learning and mean-based learning}

In the classic multi-armed bandit problem with $T$ rounds, the learner selects one of $K$ options (a.k.a. arms) on round $t$ and receives a reward $r_{i, t} \in [0, 1]$ if he selects option $i$. The rewards can be chosen adversarially and the learner's objective is to maximize her total reward. 

Let $i_t$ be the arm pulled by the learner at round $t$. The {\em regret} for a (possibly randomized) learning algorithm $\Alg$ is defined as the difference between performance of the algorithm $\Alg$ and the best arm: $\Reg(\Alg) = \max_i \sum_{t=1}^T r_{i,t} - r_{i_t, t}$. An algorithm $\Alg$ for the multi-armed bandit problem is {\em no-regret} if the expected regret is sub-linear in $T$, i.e., $\E[\Reg(\Alg)] = o(T)$. In addition to the {\em bandits} setting in which the learner only learns the reward of the arm he pulls, our results also apply to the {\em experts} setting in which the learner can learn the rewards of all arms for every round. Simple no-regret strategies exist in both the bandits and the experts settings. 

Among no-regret algorithms, we are interested in two special classes of algorithms. The first is the class of {\em mean-based} strategies:
\begin{definition}[Mean-based Algorithm]\label{def:mean-based}
	Let $\sigma_{i,t} = \sum_{s = 1}^t r_{i,s}$ be the cumulative reward for pulling arm $i$ for the first $t$ rounds. An algorithm is $\gamma$-mean-based if whenever $\sigma_{i,t} < \sigma_{j,t} - \gamma T$, the probability for the algorithm to pull arm $i$ on round $t$ is at most $\gamma$. An algorithm is mean-based if it is $\gamma$-mean-based for some $\gamma = o(1)$. 
\end{definition}
 
Intuitively, mean-based strategies are strategies that play the arm that historically performs the best. \citet{BMSW18} show that many no-regret algorithms are mean-based, including commonly used variants of EXP3 (for the bandits setting), the Multiplicative Weights algorithm (for the experts setting) and the Follow-the-Perturbed-Leader algorithm (for the experts setting).

The second class is the class of {\em no-swap-regret} algorithms:
\begin{definition}[No-Swap-Regret Algorithm]\label{def:no-swap}
	The \emph{swap regret} $\Reg_{swap}(\Alg)$ of an algorithm $\Alg$ is defined as
    \[
        \Reg_{swap}(\Alg) = \max_{\pi: [K]\rightarrow[K]} \Reg(\Alg, \pi) = \sum_{t=1}^T r_{\pi(i_t),t} - r_{i_t, t}
	\]
	where the maximum is over all functions $\pi$ mapping options to options. An algorithm is \emph{no-swap-regret} if the expected swap regret is sublinear in $T$, i.e. $\E[\Reg_{swap}(\Alg)] = o(T)$.
\end{definition}

Intuitively, no-swap-regret strategies strengthen the no-regret criterion in the following way: no-regret guarantees the learning algorithm performs as well as the best possible arm overall, but no-swap-regret guarantees the learning algorithm performs as well as the best possible arm over each subset of rounds where the same action is played. Given a no-regret algorithm, a no-swap-regret algorithm can be constructed via a clever reduction (see~\cite{BM05}).

\section{Playing against no-regret learners}

\subsection{Achieving Stackelberg equilibrium utility}

To begin with, we show that the \fp can achieve an average utility per round arbitrarily close to the Stackelberg value against a no-regret learner.

\begin{theorem}
\label{thm:atleastStack}
Let $V$ be the Stackelberg value of the game $G$. If the learner is playing a no-regret learning algorithm, then for any $\varepsilon > 0$, the \fp can guarantee at least $(V - \varepsilon)T - o(T)$ utility.
\end{theorem}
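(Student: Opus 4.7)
The plan is to modify the Stackelberg mixed strategy $\alpha^*$ into a nearby strategy $\alpha_\delta$ under which the Stackelberg best response $b^*$ becomes the \emph{strict} best response of the learner by a positive margin, at the cost of losing only $O(\delta)$ in per-round \fps utility. Once this is done, a no-regret \spl must play actions other than $b^*$ on at most an $o(T)$ fraction of rounds, so repeatedly playing $\alpha_\delta$ secures at least $(V - O(\delta))T - o(T)$ utility, and taking $\delta = \Theta(\varepsilon)$ yields the bound. Note that playing $\alpha^*$ itself is not enough: other best responses may tie with $b^*$ in \spls utility but be much worse for the \fpp, and the learner is free to choose them.

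The technical heart of this plan is the following consequence of the assumption that $b^*$ is not weakly dominated: there exist $\alpha^\dagger \in \Delta(\A)$ and $\eta > 0$ such that $u_L(\alpha^\dagger, b^*) \geq u_L(\alpha^\dagger, b) + \eta$ for every $b \neq b^*$. To see this, apply the bilinear minimax theorem to $f(\alpha, \beta) = u_L(\alpha, b^*) - u_L(\alpha, \beta)$ on $\Delta(\A) \times \Delta(\B \setminus \{b^*\})$. Non-domination says that for every $\beta$ some pure $a$ satisfies $f(a, \beta) > 0$, and compactness of the $\beta$-simplex together with continuity of $f$ gives $\min_\beta \max_a f(a, \beta) > 0$; minimax equality then turns this into an $\alpha^\dagger$ attaining the positive value on every pure $b \neq b^*$.

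Given $\alpha^\dagger$, define $\alpha_\delta = (1-\delta)\alpha^* + \delta \alpha^\dagger$ and play it in every round. By bilinearity and $|u_O| \leq 1$, $u_O(\alpha_\delta, b^*) \geq V - 2\delta$; and for every $b \neq b^*$, $u_L(\alpha_\delta, b^*) - u_L(\alpha_\delta, b) \geq \delta \eta$, since $b^*$ is already a weak best response to $\alpha^*$ and a strictly $\eta$-better response to $\alpha^\dagger$. Let $N_j$ be the (random) number of rounds the \spl plays $b_j$. The no-regret guarantee applied against the fixed arm $b^*$ gives, in expectation,
\[
\sum_{j} N_j \bigl(u_L(\alpha_\delta, b^*) - u_L(\alpha_\delta, b_j)\bigr) \leq o(T),
\]
which, combined with the $\delta\eta$ margin, forces $\E[\sum_{b_j \neq b^*} N_j] \leq o(T)/(\delta\eta) = o(T)$ for any fixed $\delta$. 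Summing \fps utility over rounds and crudely bounding the few bad rounds by $|u_O| \leq 1$ gives at least $(V - 2\delta)T - o(T)$ expected utility; setting $\delta = \varepsilon/2$ completes the proof.

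The step I expect to require the most care is the strict-separation lemma in the second paragraph: the non-weakly-dominated hypothesis is doing real work there, since without it one could construct a Stackelberg best response $b^*$ that is never \emph{strictly} preferred by the learner to some alternative, letting the \spl punish the \fp indefinitely while maintaining zero regret. Once the separation lemma is in hand, the rest is a standard margin-based combination with the no-regret bound.
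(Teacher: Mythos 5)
Your proposal is correct and takes essentially the same approach as the paper: perturb the Stackelberg mixed strategy toward a strategy under which the Stackelberg best response $b^*$ is strictly preferred by a margin (which exists because $b^*$ is not weakly dominated), then use the no-regret guarantee to conclude the learner deviates from $b^*$ in only $o(T)$ rounds. The only cosmetic difference is that you obtain the strict-separation strategy via the minimax theorem, whereas the paper invokes Farkas's lemma --- the same duality fact, and your version is if anything spelled out more carefully.
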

\begin{proof}
Let $(\alpha, b)$ be the Stackelberg equilibrium of the game $G$. Since $(\alpha, b)$ forms a Stackelberg equilibrium, $b \in \argmax_{b'} u_L(\alpha, b')$. Moreover, by the assumption that the learner does not have a weakly dominated strategy, there does not exist $\beta \in \Delta(\B \setminus \{b\})$ such that for all $a \in \A$, $u_L(a, \beta) \geq u_L(a, b)$. By the hyperplane separation theorem between a point and a convex set, there must exist an $\alpha' \in \Delta(\A)$ such that for all $b' \in \B \setminus \{b\}$, $u_L(\alpha', b) \geq u_L(\alpha', b') + \kappa$ for $\kappa > 0$. 

Therefore, for any $\delta \in (0, 1)$, the \fp can play the strategy $\alpha^* = (1-\delta) \alpha + \delta \alpha'$ such that $b$ is the unique best response to $\alpha^*$ and playing strategy $b' \neq b$ will induce a utility loss at least $\delta \kappa$ for the learner. As a result, since the leaner is playing a no-regret learning algorithm, in expectation, there is at most $o(T)$ rounds in which the learner plays $b' \neq b$. It follows that the \fps utility is at least $VT - \delta (V - u_O(\alpha', b))T - o(T)$. Thus, we can conclude our proof by setting $\varepsilon = \delta (V - u_O(\alpha', b))$.
\end{proof}

Next, we show that in the special class of constant-sum games, the Stackelberg value is the best that the \fp can hope for when playing against a no-regret learner.

\begin{theorem}\label{thm:constantsum}
Let $G$ be a constant-sum game, and let $V$ be the Stackelberg value of this game. If the learner is playing a no-regret algorithm, then the \fp receives no more than $VT + o(T)$ utility.
\end{theorem}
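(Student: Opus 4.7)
The plan is to exploit the constant-sum structure to translate the \spls no-regret guarantee directly into an upper bound on the \fps cumulative payoff. The first step is to recognize that, because a best-response for the \spl in a constant-sum game is a minimizer of $u_O$, the Stackelberg value satisfies $V = \max_{\alpha \in \Delta(\A)} \min_{j} u_O(\alpha, b_j)$. By the minimax theorem this equals the minimax value, and in particular for every mixed strategy $\alpha \in \Delta(\A)$ we have the pointwise inequality $\min_{j} u_O(\alpha, b_j) \leq V$. This is the only game-theoretic ingredient I need.

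Next I would apply the \spls no-regret condition to the realized sequence of play $a^1, \ldots, a^T$ and $b^1, \ldots, b^T$. The guarantee reads $\E\bigl[\sum_t u_L(a^t, b^t)\bigr] \geq \E\bigl[\max_j \sum_t u_L(a^t, b_j)\bigr] - o(T)$. Substituting $u_L(\cdot,\cdot) = C - u_O(\cdot,\cdot)$ on both sides, the $CT$ terms cancel and $\max$ becomes $-\min$, producing $\E\bigl[\sum_t u_O(a^t, b^t)\bigr] \leq \E\bigl[\min_j \sum_t u_O(a^t, b_j)\bigr] + o(T)$. Crucially this inequality holds even when the \fp is adaptive, since we are only using the \spls guarantee against the random sequence $\{a^t\}$.

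Finally I would pass to the empirical distribution of the \fps actions. Define (the possibly random) $\bar{\alpha} \in \Delta(\A)$ by $\bar{\alpha}_i = \frac{1}{T}\sum_{t=1}^T \mathbf{1}[a^t = a_i]$, so that $\min_{j} \sum_t u_O(a^t, b_j) = T \cdot \min_{j} u_O(\bar{\alpha}, b_j)$. The first step gives $\min_{j} u_O(\bar{\alpha}, b_j) \leq V$ sample by sample; taking expectations and combining with the inequality from the second step yields $\E\bigl[\sum_t u_O(a^t, b^t)\bigr] \leq VT + o(T)$, which is the desired conclusion.

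I do not anticipate any serious obstacle: the whole argument is a short chain of substitutions together with the minimax identification of $V$. The one subtlety worth flagging is the ordering of operations in the adaptive case, where $\bar{\alpha}$ is itself random; but since the pointwise bound $\min_{j} u_O(\bar{\alpha}, b_j) \leq V$ holds for every realization of $\bar{\alpha}$, taking expectations causes no difficulty.
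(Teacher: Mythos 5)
Your proposal is correct and follows essentially the same route as the paper's proof: pass to the empirical distribution of the optimizer's actions, use the learner's no-regret guarantee together with $u_L = C - u_O$ to bound the optimizer's total payoff by $T\min_j u_O(\bar\alpha, b_j) + o(T)$, and identify $\max_\alpha \min_j u_O(\alpha, b_j)$ with the Stackelberg value $V$ via the minimax theorem. Your additional care about expectations and the randomness of $\bar\alpha$ in the adaptive case is a fine (and slightly more explicit) treatment of a point the paper handles implicitly.
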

\begin{proof}
Let $\vec a = (a^1, \cdots, a^T)$ be the sequence of the \fps actions. Moreover, let $\alpha^* \in \Delta(\A)$ be a mixed strategy such that $\alpha^*$ plays $a_i \in \A$ with probability $\alpha^*_i = {|\{t~|~a^t = a_i\}|}/{T}$.

Since the learner is playing a no-regret learning algorithm, the learner's cumulative utility is at least $\max_{b' \in \B} u_L(a^*, b') T - o(T) = CT-\left(\min_{b' \in \B} u_O(a^*, b') T + o(T)\right)$, where $C$ is the constant sum, which implies that the \fps utility is at most 
\[
    \max_{a^* \in \Delta(\A)} \min_{b' \in \B} u_O(a^*, b') T + o(T) = VT + o(T)
\]
where the equality follows that the Stackelberg value is equal to the minimax value by the minimax theorem for a constant-sum game.
\end{proof}

\subsection{No-swap-regret learning}

In this section, we show that if the learner is playing a no-swap-regret algorithm, the \fp can only achieve their Stackelberg utility per round.

We will need the following standard result from convex optimization.

\begin{lemma}\label{lem:swap_dist}
Let $f:\mathbb{R}^d\to\mathbb{R}$ be a convex piecewise-linear function with finitely many pieces. 
Assume $\min f=0$, and that the set of minimizers $K:=f^{-1}(0)$ is bounded.
Then there exists a constant $C_f>0$ such that
\[
\operatorname{dist}(x,K)\ \le\ C_f\,f(x)\qquad\forall\,x\in\mathbb{R}^d,
\]
where $\operatorname{dist}(x, K)$ is the minimum Euclidean distance between $x$ and a point $y \in K$.
\end{lemma}


\begin{theorem}\label{thm:noswap}
Let $V$ be the Stackelberg value of the game $G$. If the learner is playing a no-swap-regret algorithm, then the \fp will receive no more than $VT + o(T)$ utility. 
\end{theorem}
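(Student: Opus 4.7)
The plan is to exploit the per-action guarantee that distinguishes no-swap regret from ordinary external regret. For each learner action $b \in \B$, let $T_b = \{t : b^t = b\}$, $n_b = |T_b|$, and let $\alpha_b \in \Delta(\A)$ denote the empirical distribution of the \fps actions across the rounds in $T_b$. The \fps total utility then decomposes exactly as $\sum_{b \in \B} n_b \cdot u_O(\alpha_b, b)$. Applying the no-swap-regret guarantee to the swap function that sends $b \mapsto b'$ and fixes every other action yields, in expectation,
\[
n_b \bigl[u_L(\alpha_b, b') - u_L(\alpha_b, b)\bigr] \;\le\; \E[\Reg_{swap}] \;=\; o(T)
\]
for every pair $b, b' \in \B$. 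In other words, $b$ is a $\delta_b$-approximate best response to $\alpha_b$, where $\delta_b \le o(T)/n_b$.

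The heart of the argument is then to show that an approximate Stackelberg pair cannot yield a payoff much larger than $V$. Define the relaxed Stackelberg value $V(\delta) := \sup\{u_O(\alpha, b') : u_L(\alpha, b') \ge \max_{b''} u_L(\alpha, b'') - \delta\}$. Since $\Delta(\A) \times \B$ is compact and $u_L, u_O$ are continuous, a routine upper-semi-continuity argument gives $V(\delta) \to V$ as $\delta \to 0$; more concretely, one can combine Farkas's lemma with the non-weakly-dominated-strategy assumption (mirroring the proof of Theorem~\ref{thm:atleastStack}) to convert an $\delta$-approximate best response into an exact best response via an $O(\delta)$ perturbation of $\alpha$, obtaining the explicit linear rate $V(\delta) \le V + O(\delta)$. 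Either route yields $u_O(\alpha_b, b) \le V + g(\delta_b)$ for some $g$ with $g(0) = 0$ and continuous at $0$.

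To aggregate, with the linear rate in hand, $\sum_b n_b \, g(\delta_b) = O(1) \sum_b n_b \delta_b \le O(|\B|) \cdot o(T) = o(T)$, so summing over $b$,
\[
\sum_{b \in \B} n_b \, u_O(\alpha_b, b) \;\le\; \sum_{b \in \B} n_b \bigl(V + g(\delta_b)\bigr) \;\le\; V T + o(T),
\]
and taking expectations completes the theorem. Without the linear rate, one instead splits $\B$ by whether $n_b \ge \eta T$: small actions contribute at most $|\B|\eta T$ in total, while for large actions $\delta_b \le o(T)/(\eta T) \to 0$ for any fixed $\eta > 0$, contributing at most $V T + T\, g\!\left(o(T)/(\eta T)\right)$; letting $\eta \to 0$ slowly recovers $V T + o(T)$.

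The main obstacle is the middle step: turning the per-action approximate-best-response conclusion into a Stackelberg-value upper bound requires controlling how fast $V(\delta) \to V$ as $\delta \to 0$. Once this (semi-)continuity is established, either directly or via the Farkas-based perturbation argument already used in Theorem~\ref{thm:atleastStack}, the remainder is decomposition and bookkeeping.
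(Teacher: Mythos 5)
Your proposal is correct, and its first half coincides with the paper's: decompose the \fps utility by the learner's action, form for each $b \in \B$ the conditional empirical mix $\alpha_b$ of the \fps play on the rounds where $b$ was played, and feed the swap functions that remap $b$ alone into the no-swap-regret guarantee. Where you diverge is the finishing step. The paper splits $\B$ into actions that are \emph{exact} best responses to their conditional mix (each such round is worth at most $V$, since the pair $(\alpha^{b_j}, b_j)$ is Stackelberg-feasible) and the remaining set $\bar{\B}$, and then divides the swap regret by the minimum best-response gap $\delta$ to conclude the rounds attributed to $\bar{\B}$ have mass $o(T)$. You instead keep every action and route through the relaxed value $V(\delta)$, proving the linear rate $V(\delta) \leq V + O(\delta)$ by mixing $\alpha_b$ with the Farkas/non-dominance certificate $\alpha'_b$ (margin $\kappa_b > 0$) with weight $O(\delta)$ --- the same tool as in Theorem~\ref{thm:atleastStack}. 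This is a genuine difference with a real payoff: the paper's $\delta$ is computed from the realized conditional mixes, which vary with $T$, so the uniform lower bound its $o(T)$ conclusion implicitly needs is exactly the continuity issue you isolate and settle quantitatively; your version makes the whole estimate pathwise-linear in the realized swap regret, $\sum_b n_b\, u_O(\alpha_b, b) \leq VT + (2/\kappa)\,|\B|\,\Reg_{swap}(\A)$ with $\kappa = \min_b \kappa_b$, so a single expectation at the end finishes the proof. The costs are mild: you need the no-weakly-dominated-action assumption for every $b \in \B$ (the paper assumes this globally), and you should state the per-pair inequality against the \emph{realized} swap regret rather than writing $\delta_b \leq o(T)/n_b$ (an in-expectation bound applied to a realized quantity); with the linear rate this bookkeeping is immediate, whereas your semicontinuity-plus-bucketing fallback would additionally need something like Markov's inequality to pass from expected to realized regret. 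Also note $\delta_b$ can be negative, so use $\max(\delta_b, 0) \leq \Reg_{swap}(\A)/n_b$.
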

\begin{proof}
Consider a random transcript of play where $\vec{a} = (a^1, \cdots, a^T) \in [M]^{T}$ is the sequence of the optimizer's actions and $\vec{b} = (b^1, \cdots, b^T) \in [N]^{T}$ is the  sequence of the learner's actions. We will show that the optimizer's utility under this transcript of play is at most $VT + C_{G}\cdot\Reg_{swap}(\vec{a}, \vec{b})$, for some game-specific constant $C_{G} > 0$. For a no-swap-regret algorithm (where $\E[\Reg_{swap}(\vec{a}, \vec{b})] = o(T)$), this implies the optimizer's expected utility is at most $VT + o(T)$, and thus the theorem statement.

For each $i \in [M]$ and $j \in [N]$, let $p_{i, j} = \frac{1}{T}\sum_{t} 1[(a^{t}, b^{t}) = (a_i, b_{j})]$ be the overall empirical probability the learner and optimizer play the joint strategy profile $(a_i, b_j)$. Note that we can express the swap regret and optimizer's total utility as follows:
\begin{equation}\label{eq:swap_csp}
\Reg_{swap}(\vec{a}, \vec{b}) = T \cdot \max_{\pi:[N]\rightarrow[N]} \left(\sum_{i \in [M], j \in [N]} p_{i, j}\left(u_L(a_i, b_j) - u_L(a_i, \pi(b_j))\right)\right)
\end{equation}
and
\begin{equation}\label{eq:util_csp}
\sum_{t=1}^{T} u_{O}(a^{t}, b^{t}) = T \cdot \left(\sum_{i \in [M], j \in [N]} p_{i, j} u_{O}(a_i, b_j)\right).
\end{equation}

We will first argue that if $\Reg_{swap}(\vec{a}, \vec{b}) = 0$, the optimizer's utility is at most $VT$. For each $j \in [N]$, let $P^{b_j} = \sum_{i \in [M]} p_{i, j}$ be the total empirical probability the learner plays action $b_j$, and let $\alpha^{b_j} \in \Delta(\A)$ be the conditional strategy played by the optimizer conditioned on the learner playing $b$ (i.e., $\alpha^{b_j}_{i} = p_{i, j}/\sum_{i \in [M]} p_{i, j}$). By combining terms with the same learner action, we can thus rewrite \eqref{eq:swap_csp} and \eqref{eq:util_csp} in the form

\begin{equation}\label{eq:swap_by_learn_action}
\Reg_{swap}(\vec{a}, \vec{b}) = T \cdot \sum_{j \in [N]} P^{b_j} \cdot \max_{j^{*} \in [N]} \left(u_L(\alpha^{b_j}, b_j) - u_L(\alpha^{b_j}, b_{j^*})\right)
\end{equation}
\noindent
and
\begin{equation}\label{eq:util_by_learn_action}
\sum_{t=1}^{T} u_{O}(a^{t}, b^{t}) = T \cdot \left(\sum_{j \in [N]} P^{b_j} u_{O}(\alpha^{b_j}, b_j)\right).
\end{equation}

If $\Reg_{swap}(\vec{a}, \vec{b}) = 0$, then by \eqref{eq:swap_by_learn_action}, it must be the case that for each $j$, $b_j \in \argmax_{b'}u_L(\alpha^{b_j}, b')$, and so the learner's action $b_j$ is a best response to the mixed action $\alpha_i^{b_j}$. But if $b_j$ is a best response to the mixed action $\alpha_i^{b_j}$, then by definition $u_{O}(\alpha^{b_j}, b_j) \leq V$, and by \eqref{eq:util_by_learn_action} it follows that the optimizer's utility is at most $VT$.

We now consider the case when $\Reg_{swap}(\vec{a}, \vec{b}) > 0$. Note that by \eqref{eq:swap_csp}, we have that $\Reg_{swap}(\vec{a}, \vec{b}) = T \cdot f(p)$ for a convex piecewise-linear (with at most $K^K$ pieces) function $f: \Delta^{mn} \rightarrow \mathbb{R}$ whose minimum value is $0$. It follows from Lemma~\ref{lem:swap_dist} that for any $p \in \Delta^{mn}$ where $\Reg_{swap}(p) > 0$, there exists a $p' \in \Delta^{mn}$ with $f(p') = 0$ and $\|p - p'\|_2 \leq C_{f}\cdot f(p)$ (for some constant $C_{f} > 0$). By \eqref{eq:util_csp}, we then have that

\begin{eqnarray*}
\sum_{t=1}^{T} u_{O}(a^{t}, b^{t}) &=& T \cdot \left(\sum_{i \in [M], j \in [N]} (p_{i, j} - p'_{i, j}) u_{O}(a_i, b_j) + \sum_{i \in [M], j \in [N]} p'_{i, j} u_{O}(a_i, b_j)\right)\\
&\leq & T \cdot \left(\|p - p'\|_{1} + V\right) \\
&\leq & T\cdot \left(\sqrt{mn}\|p - p'\|_{2} + V \right) \\
&\leq & T\cdot \left(\sqrt{mn} \cdot C_{f} f(p) + V\right) \\
&=& VT + (C_{f}\sqrt{mn}) \Reg_{swap}(\vec{a}, \vec{b}).
\end{eqnarray*}

Here in the first inequality, we have used the fact that if $f(p') = 0$, then $\Reg_{swap}(p') = 0$ and so $\sum_{i \in [M], j \in [N]} p'_{i, j} u_{O}(a_i, b_j) \leq V$ by our previous analysis. If we set $C_{G} = C_{f}\sqrt{mn}$ then this proves our desired statement. (In particular, note that the function $f$ is only dependent on the game $G$ and not on $T$ or any other parameters of the learner's algorithm).
\end{proof}

\begin{theorem}\label{thm:twoactions}
Let $G$ be a game where the learner has $N = 2$ actions, and let $V$ be the Stackelberg value of this game. If the learner is playing a no-regret algorithm, then the \fp receives no more than $VT + o(T)$ utility.
\end{theorem}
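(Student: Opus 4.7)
The plan is to exploit the two-action structure via a small perturbation argument. Following the aggregation in the proof of Theorem~\ref{thm:noswap}, for each $j \in \{1,2\}$ I define $n_j = \E[|\{t : b^t = b_j\}|]$ and the averaged conditional mixed strategy $\alpha^{(j)} \in \Delta(\A)$ with $\alpha^{(j)}_i = \big(\sum_t \Pr[a^t = a_i, b^t = b_j]\big)/n_j$. Linearity of expectation then gives $\E[U_O] = n_1 u_O(\alpha^{(1)}, b_1) + n_2 u_O(\alpha^{(2)}, b_2)$ (with $n_1 + n_2 = T$), so the goal is to bound this expression by $VT + o(T)$.

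I will use the no-regret guarantee applied with each constant-action comparator to show that $b_j$ is approximately a best response to $\alpha^{(j)}$. Introducing the single linear functional $\Delta(\alpha) = u_L(\alpha, b_1) - u_L(\alpha, b_2)$, comparing the learner's actual play against always-$b_2$ (resp.\ always-$b_1$) yields $n_1 \Delta(\alpha^{(1)}) \geq -o(T)$ and $n_2 \Delta(\alpha^{(2)}) \leq o(T)$. Thus $\alpha^{(1)}$ lies within $o(T)/n_1$ of the hyperplane $\{\alpha : \Delta(\alpha) = 0\}$ separating the two best-response regions, and symmetrically for $\alpha^{(2)}$.

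The central step is a small perturbation into the correct best-response half-space. By the no-weakly-dominated-strategy assumption, neither $b_1$ nor $b_2$ is weakly dominated by the other, so there exist $\alpha^+, \alpha^- \in \Delta(\A)$ and a game-dependent constant $\kappa > 0$ with $\Delta(\alpha^+) \geq \kappa$ and $\Delta(\alpha^-) \leq -\kappa$. Whenever $\Delta(\alpha^{(1)}) < 0$, define $\tilde\alpha^{(1)} = (1-\lambda_1)\alpha^{(1)} + \lambda_1 \alpha^+$ with $\lambda_1 = -\Delta(\alpha^{(1)})/\kappa$; linearity of $\Delta$ forces $\Delta(\tilde\alpha^{(1)}) \geq 0$, so $b_1$ is a (weak) best response to $\tilde\alpha^{(1)}$ and hence $u_O(\tilde\alpha^{(1)}, b_1) \leq V$. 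Rewriting $\alpha^{(1)} = (\tilde\alpha^{(1)} - \lambda_1 \alpha^+)/(1-\lambda_1)$ and using $|u_O| \leq 1$ then gives $u_O(\alpha^{(1)}, b_1) \leq V + O(\lambda_1)$; when $\Delta(\alpha^{(1)}) \geq 0$ no perturbation is needed. The symmetric perturbation using $\alpha^-$ handles $\alpha^{(2)}$.

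Summing yields $\E[U_O] \leq (n_1 + n_2)V + O(n_1 \lambda_1 + n_2 \lambda_2)$, and in each case $n_j \lambda_j \leq o(T)/\kappa$, so with $\kappa$ a constant depending only on $G$ the error term is $o(T)$, giving the claimed bound. The only conceptual ingredient beyond the bookkeeping of Theorem~\ref{thm:noswap} is the existence of $\kappa > 0$, which follows directly from no weak dominance, so I expect no real obstacle beyond some routine care with $\lambda_j \in [0,1]$ and with regimes where some $n_j$ is already $o(T)$. Worth noting: this argument genuinely relies on $N = 2$, since for $N \geq 3$ the best-response regions no longer form a pair of complementary half-spaces and a single convex perturbation does not suffice, consistent with Theorem~\ref{thm:beatingStack}.
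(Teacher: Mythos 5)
Your proposal is correct, but it takes a genuinely different route from the paper. The paper proves Theorem~\ref{thm:twoactions} by a short reduction to Theorem~\ref{thm:noswap}: when $N=2$ the only nontrivial swap map is the transposition, whose swap regret equals the sum of the external regrets against the two constant maps ($b_1$ everywhere and $b_2$ everywhere), so any no-regret algorithm is automatically no-swap-regret and the earlier theorem applies verbatim. You instead argue directly: the two external-regret inequalities give $n_1\Delta(\alpha^{(1)}) \ge -o(T)$ and $n_2\Delta(\alpha^{(2)}) \le o(T)$ (this is exactly the ``no-regret $=$ no-swap-regret when $N=2$'' fact in disguise), and you then convert approximate best responses into exact ones by a convex perturbation with margin $\kappa>0$ supplied by the no-weakly-dominated-strategies assumption, yielding $u_O(\alpha^{(j)}, b_j) \le V + O(\lambda_j)$ with $n_j\lambda_j \le o(T)/\kappa$. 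What the paper's route buys is brevity and independence from any margin argument; what your route buys is self-containedness and a fully quantitative bound: your $\kappa$ is a fixed constant of the game, whereas the constant $\delta$ appearing in the proof of Theorem~\ref{thm:noswap} depends on the realized averaged strategies $\alpha^{b_j}$ (hence implicitly on $T$), so your perturbation step is arguably the cleaner way to pass from ``approximately a best response'' to ``utility at most $V$.'' Your closing observation is also exactly right: the single linear functional $\Delta$ and the two complementary half-spaces are precisely where $N=2$ enters, consistent with the failure of the equivalence for $N\ge 3$ exhibited by Theorem~\ref{thm:beatingStack}. The loose ends you flag ($\lambda_j\in[0,1]$, and the regime $|\Delta(\alpha^{(j)})|\ge\kappa$, where necessarily $n_j = o(T)$ and the contribution is trivially $o(T)$ since $|u_O|\le 1$) are indeed routine.
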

\begin{proof}
By Theorem~\ref{thm:noswap}, it suffices to show that when there are two actions for the learner, a no-regret learning algorithm is in fact a no-swap-regret learning algorithm. 

When there are only two actions, there are three possible mappings from $\B \to \B$ other than the identity mapping. Let $\pi^1$ be a mapping such that $\pi^1(b_1) = b_1$ and $\pi^1(b_2) = b_1$, $\pi^2$ be a mapping such that $\pi^2(b_1) = b_2$ and $\pi^2(b_2) = b_2$, and $\pi^3$ be a mapping such that $\pi^3(b_1) = b_2$ and $\pi^3(b_2) = b_1$. 

Since the learner is playing a no-regret learning algorithm, we have $\E[\Reg(\A, \pi^1)] = o(T)$ and $\E[\Reg(\A, \pi^2)] = o(T)$. Moreover, notice that 
\[
    \E[\Reg(\A, \pi^3)] = \E[\Reg(\A, \pi^1)] + \E[\Reg(\A, \pi^2)] = o(T),
\]which concludes the proof.
\end{proof}

\section{Playing against mean-based learners}

From the results of the previous section, it is natural to conjecture that no \fp can achieve more than the Stackelberg value per round if playing against a no-regret algorithm. After all, this is true for the subclass of no-swap-regret algorithms (Theorem \ref{thm:noswap}) and is true for simple games: constant-sum games (Theorems \ref{thm:constantsum}) and games in which the learner only has two actions (Theorem~\ref{thm:twoactions}). 

In this section we show that this is \textit{not} the case. Specifically, we show that there exist games $G$ where an \fp can win strictly more than the Stackelberg value every round when playing against a mean-based learner. We emphasize that the same strategy for the \fp will work against \emph{any} mean-based learning algorithm the learner uses.

We then proceed to characterize the optimal strategy for a non-adaptive \fp playing against a mean-based learner as the solution to an optimal control problem in $N$ dimensions (where $N$ is the number of actions of the learner), and make several preliminary observations about structure an optimal solution to this control problem must possess. Understanding how to efficiently solve this control problem (or whether the optimal solution is even computable) is an intriguing open question. 

\subsection{Beating the Stackelberg value}

We begin by showing it is possible for the \fp to get significantly (linear in $T$) more utility when playing against a mean-based learner.

\begin{theorem}
\label{thm:beatingStack}
There exists a game $G$ with Stackelberg value $V$ where the \fp can receive utility at least $V'T - o(T)$ against a mean-based learner for some $V' > V$. 
\end{theorem}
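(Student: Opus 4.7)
My plan is to prove the theorem by exhibiting an explicit bimatrix game $G$ together with a non-adaptive two-phase strategy for the \fp whose average per-round utility strictly exceeds the Stackelberg value $V$. The game must have $N \geq 3$ learner actions, since Theorem~\ref{thm:twoactions} rules out any improvement when $N=2$. The key leverage is the mean-based property (Definition~\ref{def:mean-based}): up to $\gamma = o(1)$ slack, the learner always plays the arm whose cumulative reward is currently largest; since the \fps sequence is non-adaptive, those cumulative rewards are a deterministic function of time, so the \fp can simulate and steer the learner round by round.

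The core idea is to pick an action $b^{\star} \in \B$ that is highly lucrative against some \fp action $a^{\dagger}$ (i.e.\ $u_O(a^{\dagger}, b^{\star}) > V$) but that is never the \spls best response to any mixed commitment a Stackelberg-optimal \fp would actually play. The strategy then has two phases:
\begin{description}
\item[Phase 1 (priming, length $\lambda T$):] play a carefully chosen sequence of actions so that by the end of the phase the cumulative \spl reward of $b^{\star}$ exceeds that of every other $b_j$ by $\Omega(T)$.
\item[Phase 2 (exploitation, length $(1-\lambda)T$):] play $a^{\dagger}$ every round.
\end{description}
Since rewards lie in $[-1,1]$ the $b^{\star}$-lead shrinks by at most $2$ per round, so choosing $\lambda$ small enough keeps the lead above $\gamma T$ throughout Phase~2; the mean-based guarantee then forces the \spl to play $b^{\star}$ on all but $o(T)$ rounds, earning the \fp $u_O(a^{\dagger}, b^{\star})$ per exploitation round.

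The concrete construction that I would write down is a small game in the spirit of the auction example of~\cite{BMSW18}: three learner bids and two or three \fp menus, arranged so that (a) the Stackelberg-optimal commitment forces the learner to a ``middle'' action giving the \fp only $V$, while (b) the ``high'' action $b^{\star}$ pays the \fp strictly more than $V$ when matched with $a^{\dagger}$, and (c) priming $b^{\star}$ can be done while the \fp still earns at least some bounded utility. Writing $\bar v_1$ for the average \fp utility during Phase~1, I would verify by direct payoff calculation that
\[
\lambda \, \bar v_1 + (1 - \lambda) \, u_O(a^{\dagger}, b^{\star}) \;>\; V
\]
for an explicit $\lambda \in (0,1)$, yielding the desired $V' > V$.

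The main obstacle is balancing the two phases simultaneously: the priming phase intrinsically costs the \fp (it plays actions chosen to distort the \spls historical best arm rather than to maximize utility), so the lead $\Omega(T)$ built up must be large enough to survive all of Phase~2 yet cheap enough that the loss in Phase~1 is strictly less than the gain in Phase~2. The remaining $o(T)$ error absorbs three sublinear terms: the $\gamma T$ rounds on which the $\gamma$-mean-based learner may deviate, a short transition window between phases where the lead may briefly dip, and the inherent $\gamma = o(1)$ slack in the mean-based definition. I expect the main creative work to be in pinning down the exact game and priming sequence that realize this balance cleanly; once such a game is in hand, the rest of the analysis is a direct counting argument using Definition~\ref{def:mean-based}.
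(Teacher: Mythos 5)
There is a genuine gap: Theorem~\ref{thm:beatingStack} is an existence statement, and the entire mathematical content of its proof is an explicit game together with a verified payoff calculation showing the two-phase strategy nets strictly more than $VT - o(T)$. Your proposal supplies the schema (prime, then exploit, with the mean-based slack $\gamma T$ and the phase-transition window absorbed into $o(T)$) but never exhibits the game $G$, the priming sequence, or the numbers; you explicitly defer ``pinning down the exact game'' as future creative work, and the inequality $\lambda \bar v_1 + (1-\lambda)u_O(a^{\dagger},b^{\star}) > V$ is asserted rather than checked for any concrete payoff matrix. This deferral is not a routine detail. A naive instantiation of your schema can easily net \emph{exactly} $V$ and no more: the per-round rate at which you can build $b^{\star}$'s lead and the per-round rate at which it decays under $a^{\dagger}$ force $\lambda$ to be bounded away from $0$, and since $b^{\star}$ must remain the historical leader through the end of the game, $b^{\star}$ is (approximately) a best response to the time-averaged commitment, so any rounds in which the learner actually plays $b^{\star}$ contribute at most $V$ on average; the strict improvement has to come from rounds where the learner plays a non-best-response, and whether the bookkeeping comes out strictly above $V$ depends entirely on the specific payoffs.

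It is also worth noting that the paper's witness game works by a mechanism somewhat different from your literal schema. In the $2\times 3$ example of Table~\ref{tab:beating}, the lucrative action (Right) never acquires an $\Omega(T)$ lead by the end of the priming phase --- at that point Left is still slightly ahead. Instead, the priming phase (playing Top) \emph{buries} the learner's true best response to the exploitation action (Mid, the best response to Bottom) at rate $1$ per round, while keeping the learner content with a third action (Left) whose cumulative value collapses as soon as the optimizer switches to Bottom; Right then inherits the lead for almost all of the second half at zero priming cost, since $u_O(\text{Top},\text{Left})=0=V$. Your requirement that $b^{\star}$ lead every other action by $\Omega(T)$ at the end of Phase~1 and that this lead merely shrink during Phase~2 is a more restrictive dynamic, and while games realizing it exist (this is essentially the auction construction of \cite{BMSW18}), establishing that requires exactly the explicit construction and verification your proposal omits. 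As written, the proposal is a plausible plan, not a proof of the theorem.
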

\begin{proof}
    Assume that the learner is using a $\gamma$-mean-based algorithm. Consider the bimatrix game shown in Table~\ref{tab:beating} in which the \fp is the row player (These utilities are bounded in $[-2, 2]$ instead of $[-1, 1]$ for convenience; we can divide through by $2$ to get a similar example where utility is bounded in $[-1, 1]$). We first argue that the Stackelberg value of this game is $0$. Notice that if the \fp plays Bottom with probability more than $0.5$, then the learner’s best response is to play Mid, resulting in a $-2$ utility for the \fp. However, if the \fp plays Bottom with probability at most $0.5$, the expected utility for the optimizer from each column is at most 0. Therefore, in the Stackelberg equilibrium, the \fp will play Top and Bottom with probability $0.5$ each, and the learner will best respond with purely playing Right.
    
    \begin{table}[ht]
        \centering
        \begin{tabular}{|c|c|c|c|}
            \hline
            & Left & Mid & Right \\
            \hline
            Top & (0, $\sqrt{\gamma}$) & (-2, -1) & (-2, 0) \\
            \hline
            Bottom & (0, -1) & (-2, 1) & (2, 0) \\
            \hline
        \end{tabular}
        \caption{Example game for beating the Stackelberg value.}
        \label{tab:beating}
    \end{table}
    
    However, the \fp can obtain utility $T - o(T)$ by playing Top for the first $\frac 1 2 T$ rounds and then playing Bottom for the remaining $\frac 1 2 T$ rounds. Given the \fps strategy, for the first $\frac 1 2 T$ rounds, the learner will play Left with probability at least $(1 - 2 \gamma)$ after first $\sqrt{\gamma} T$ rounds. For the remaining $\frac 1 2 T$ rounds, the learner will switch to play Right with probability at least $(1 - 2 \gamma)$ between $(\frac{1+\sqrt{\gamma}} 2 + \gamma) T$-th round and $(1-\gamma) T$-th round, since the cumulative utility for playing Left is at most $\frac 1 2 T \cdot \sqrt{\gamma} - \frac {\sqrt{\gamma}} 2 T - \gamma T = - \gamma T$ and the cumulative utility for playing Mid is at most $- \gamma T$.

    Therefore, the cumulative utility for the \fp for the first $\frac 1 2 T$ rounds is at least
    \[
        (1 - 2 \gamma) (\frac 1 2 - \sqrt{\gamma}) T \cdot 0 + \left(\frac 1 2 T - (1 - 2 \gamma) (\frac 1 2 - \sqrt{\gamma}) T\right) \cdot (-2) = -o(T),
    \]
    and the cumulative utility for the \fp for the remaining $\frac 1 2 T$ rounds is at least
    \[
        (1 - 2 \gamma) (\frac 1 2 - \frac{\sqrt \gamma}{2} - 2 \gamma) T \cdot 2 + \left(\frac 1 2 T - (1 - 2 \gamma) (\frac 1 2 - \frac{\sqrt \gamma}{2} - 2 \gamma) T\right) \cdot (-2) = T - o(T).
    \]
    Thus, the \fp can obtain a total utility $T - o(T)$, which is greater than $V T = 0$ for the Stackelberg value $V = 0$ in this game.
\end{proof}

\subsection{The geometry of mean-based learning}
\label{sec:geom}
We have just seen that it is possible for the \fp to get more than the Stackelberg value when playing against a mean-based learner. This raises an obvious next question: how much utility can an \fp obtain when playing against a mean-based learner? What is the largest $\alpha$ such that an \fp can always obtain utility $\alpha T - o(T)$ against a mean-based learner?

In this section, we will see how to reduce the problem of constructing the optimal gameplay of a non-adaptive \fp to solving a control problem in $N$ dimensions. The primary insight is that a mean-based learner's behavior depends only on their historical cumulative utilities for each of their $N$ actions, and therefore we can characterize the essential ``state'' of the learner by a tuple of $N$ real numbers that represent the cumulative utilities for different actions. The \fp can control the state of the learner by playing different actions, and in different regions of the state space the learner plays specific responses. 

More formally, our control problem will involve constructing a path in $\R^N$ starting at the origin. For each $i \in [N]$, let $S_i$ equals the subset of $(u_1, u_2, \dots, u_N) \in \R^N$ where $u_i = \max(u_1, u_2, \dots, u_N)$ (this will represent the subset of state space where the learner will play action $b_i$). Note that these sets $S_i$ (up to some intersection of measure 0) partition the entire space $\R^N$. 

We represent the \fps strategy $\pi$ as a sequence of tuples $(\alpha_1, t_1), (\alpha_2, t_2), \dots, (\alpha_k, t_k)$ with $\alpha_i \in \Delta(\A)$ and $t_i \in [0,1]$ satisfying $\sum_{i} t_i = 1$. Here the tuple $(\alpha_i, t_i)$ represents the \fp playing mixed strategy $\alpha_i$ for a $t_i$ fraction of the total rounds. This strategy evolves the learner's state as follows. The learner originally starts at the state $P_0 = 0$. After the $i$th tuple $(\alpha_i, t_i)$, the learner's state evolves according to $P_{i} = P_{i-1} + t_i(u_{L}(\alpha_i, b_1), u_{L}(\alpha_i, b_2), \dots, u_{L}(\alpha_i, b_N))$ (in fact, the state linearly interpolates between $P_{i-1}$ and $P_i$ as the \fp plays this action). For simplicity, we will assume that positive combinations of vectors of the form $(u_{L}(\alpha_i, b_1), u_{L}(\alpha_i, b_2), \dots, u_{L}(\alpha_i, b_N))$ can generate the entire state space $\R^N$. 

To characterize the \fps reward, we must know which set $S_i$ the learner's state belongs to. For this reason, we will insist that for each $1 \leq i \leq k$, there exists a $j_i$ such that both $P_{i-1}$ and $P_{i}$ belong to the same region $S_{j_i}$. It is possible to convert any strategy $\pi$ into a strategy of this form by subdividing a step $(\alpha, t)$ that crosses a region boundary into two steps $(\alpha, t')$ and $(\alpha, t'')$ with $t = t' + t''$ so that the first step stops exactly at the region boundary. If there is more than one possible choice for $j_i$ (i.e. $P_{i-1}$ and $P_{i}$ lie on the same region boundary), then without loss of generality we let the \fp choose $j_i$, since the \fp can always modify the initial path slightly so that $P_{i}$ and $P_{i-1}$ both lie in a unique region. 

Once we have done this, the \fps average utility per round is given by the expression:

$$U(\pi) = \sum_{i=1}^{k} t_i u_{O}(\alpha_i, b_{j_i}).$$

\begin{theorem}\label{thm:control}
Let $U^{*} = \sup_{\pi} U(\pi)$ where the supremum is over all valid strategies $\pi$ in this control game. Then
\begin{enumerate}
    \item For any $\eps > 0$, there exists a non-adaptive strategy for the \fp which guarantees expected utility at least $(U^{*} - \eps)T - o(T)$ when playing against any mean-based learner.
    \item For any $\eps > 0$, there exists no non-adaptive strategy for the \fp which can guarantee expected utility at least $(U^{*} + \eps)T + o(T)$ when playing against any mean-based learner. 
\end{enumerate}
\end{theorem}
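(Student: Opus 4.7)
I would prove both halves by coupling the repeated game against a mean-based learner to the control problem's continuous-time dynamics on the state space $\R^N$, in which a point records the learner's cumulative utilities per action. By Definition~\ref{def:mean-based}, the region $S_i$ containing the current state dictates (with $o(T)$ slack) which action the learner plays, so the \fps non-adaptive strategies correspond to paths through $\R^N$, and the asymptotic utility should equal $U^*$.

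\textbf{Part 1 (achievability).} Fix $\eps > 0$ and pick a control strategy $\pi = ((\alpha_i, t_i))_{i=1}^k$ with $U(\pi) \geq U^* - \eps/2$. First I would perturb $\pi$ into $\tilde\pi$ so that each endpoint $P_i$ lies in the interior of its region $S_{j_i}$ with a margin $\delta = \delta(\eps) > 0$ independent of $T$; by continuity of $u_O$ the cost in $U$ is $o(1)$. The \fp then plays each $\alpha_i$ for $\lfloor t_i T \rfloor$ consecutive rounds. Azuma--Hoeffding applied coordinate-wise shows the learner's actual cumulative-utility vector stays within $O(\sqrt{T \log(NT)})$ of $T \tilde P_i$ throughout segment $i$ with probability $1 - 1/T$. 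For $T$ large enough that this concentration radius plus the mean-based threshold $\gamma T$ is less than $\delta T$, the state sits inside $S_{j_i}$ with a gap exceeding $\gamma T$ from every other region, so Definition~\ref{def:mean-based} forces the learner to play $b_{j_i}$ with probability at least $1 - \gamma$ on each such round. Summing yields expected utility at least $T \cdot U(\tilde\pi) - O(\gamma T) - o(T) \geq (U^* - \eps) T - o(T)$.

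\textbf{Part 2 (optimality).} It suffices to exhibit one mean-based learner against which every non-adaptive \fp strategy earns at most $(U^* + \eps) T + o(T)$. Take the deterministic Follow-the-Leader learner that plays $\argmax_i \sigma_{i, t-1}$ with a fixed tie-breaking rule; this is $\gamma$-mean-based for every $\gamma > 0$. Given any non-adaptive strategy $(\alpha^1, \ldots, \alpha^T)$, let $\bar P_t = \sum_{s \leq t}(u_L(\alpha^s, b_1), \ldots, u_L(\alpha^s, b_N))$ be the expected cumulative-utility path, and partition the rounds into maximal consecutive stretches $I_k$ on which the starting state $\bar P_{t-1}$ lies in a single region $S_{j_k}$. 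Letting $\alpha^{(k)}$ be the time-average of $\alpha^t$ over $I_k$, assemble a control strategy $\pi$ with segments $(\alpha^{(k)}, |I_k|/T)$ in $S_{j_k}$, subdividing any segment whose straight-line path exits its region (which happens only within the last round of a stretch). Azuma--Hoeffding shows the game's expected utility equals $\sum_t u_O(\alpha^t, b_{j_t}) + o(T)$, and relating this to $T U(\pi)$ introduces a subdivision correction $\Delta$. Provided $|\Delta| = o(T)$, the game's expected utility is at most $T U(\pi) + |\Delta| + o(T) \leq T U^* + o(T) \leq (U^* + \eps) T + o(T)$.

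\textbf{Main obstacle.} The crux of Part 2 is proving $|\Delta| = o(T)$. The naive bound is $|\Delta| = O(K)$, where $K$ is the number of stretches, which in adversarial examples could be $\Omega(T)$ if the state oscillates rapidly across a region boundary. Taming this requires either refining the control-strategy construction --- e.g., splitting $\alpha^{(k)}$ into several action-specific sub-segments rather than a single time-average, or re-routing subdivisions through auxiliary ``neutral'' mixed strategies that leave the utility contribution essentially unchanged --- or a drift-amortization argument that charges each boundary crossing to a bounded change in the state's distance from the boundary. Making this precise, together with the perturbation step of Part 1 and the Azuma--Hoeffding concentration between actual and expected cumulative utilities, constitutes the technical heart of the proof.
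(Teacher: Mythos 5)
Your Part 1 is essentially the paper's argument (concentration of the realized cumulative utilities around the control path, plus a margin so the mean-based threshold $\gamma T$ is exceeded inside each region), and it is fine. The genuine gap is exactly the one you flag in Part 2: your reduction merges rounds into maximal stretches within a single region and incurs a ``subdivision correction'' $\Delta$ that you cannot bound, and indeed with boundary oscillation the number of stretches can be $\Theta(T)$, so this route does not close. The missing idea is that there is no need to merge rounds at all: the control problem places no restriction on the number of steps $k$, so one can take \emph{each round} as its own control step $(\alpha^t, 1/T_0)$, subdividing at region boundaries. The second missing device is the choice of adversarial learner. Exact follow-the-leader (your choice) does not work cleanly: when the within-round segment of the control path lies in, or crosses into, a region different from the realized leader (which can happen for $\Omega(T)$ rounds near a boundary, and also because the realized $\sigma_{j,t}$ deviates from its mean by $\Theta(\sqrt{T})$), the per-round game utility $u_O(\alpha^t, b_{\mathrm{leader}})$ can exceed the per-step control utility by a constant, and the inequality ``game utility $\leq T\,U(\pi)+o(T)$'' is not available.

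The paper instead fixes a tolerance $\gamma$ with $\gamma T \geq 2$ (e.g.\ $\gamma = T^{-1/2}$), defines $J_t$ as the set of actions within $\gamma T$ of the current leader, and lets the learner play the action in $J_t$ that is \emph{worst} for the \fpp; this is still mean-based. Because consecutive states differ by at most $1$ per coordinate, every region the round-$t$ segment touches corresponds to an action within $2 \leq \gamma T$ of the leader, hence lies in $J_t$; therefore the control utility of that step is at least $\min_{b_j \in J_t} u_O(\alpha^t, b_j)/T_0$, which is exactly the per-round utility the \fp earns against this learner. Summing over rounds gives $U(\pi) \geq U^* + 0.5\eps$ for the assembled strategy $\pi$, contradicting the definition of $U^*$ --- with no correction term $\Delta$ to control. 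So the argument is completed not by a drift-amortization or re-routing fix of your construction, but by replacing exact FTL with the tolerance-based worst-case-in-$J_t$ learner and doing the comparison round by round.
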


Understanding how to solve this control problem (even inefficiently, in finite time) is an interesting open problem. In the remainder of this section, we make some general observations which will let us cut down the strategy space of the \fp even further and propose a conjecture to the form of the optimal strategy. 

The first observation is that when the learner has $N$ actions, our state space is truly $N-1$ dimensional, not $N$ dimensional. This is because in addition to the learner's actions only depending on the cumulative reward for each action, they in fact only depend on the differences between cumulative rewards for different actions (see Definition \ref{def:mean-based}). This means we can represent the state of the learner as a vector $(x_1, x_2, \dots, x_{N-1}) \in \R^{N-1}$, where $x_i = u_i - u_N$. The sets $S_i$ for $1 \leq i \leq N-1$ can be written in terms of the $x_i$ as 
\[
    S_i = \{x | x_i = \max(x_1, \dots, x_{N-1}, 0)\}
\]
and 
\[
    S_N = \{x | 0 = \max(x_1, \dots, x_{N-1}, 0)\}.
\]

The next observation is that if the \fp makes several consecutive steps in the same region $S_i$, we can combine them into a single step. Specifically, assume $P_i$, $P_{i+1}$, and $P_{i+2}$ all belong to some region $S_j$, where $(\alpha_i, t_i)$ sends $P_i$ to $P_{i+1}$ and $(\alpha_{i+1}, t_{i+1})$ sends $P_{i+1}$ to $P_{i+2}$.  Then replacing these two steps with $\left(\frac{\alpha_i t_i + \alpha_{i+1}t_{i+1}}{t_i + t_{i+1}}, t_i + t_{i+1}\right)$ results in a strategy with the exact same reward $U(\pi)$. Applying this fact whenever possible, this means we can restrict our attention to strategies where all $P_i$ (with the possible exception of the final state $P_k$) lie on the boundary of two or more regions $S_i$.

Finally, we observe that this control problem is scale-invariant; if 
\[
    \pi = ((\alpha_1, t_1), (\alpha_2, t_2), \dots, (\alpha_n, t_n))
\]
is a valid policy that obtains utility $U$, then 
\[
    \lambda\pi = ((\alpha_1, \lambda t_1), (\alpha_2, \lambda t_2), \dots, (\alpha_n, \lambda t_n))
\]is another valid policy (with the exception that $\sum t_i = \lambda$, not $1$) which obtains utility $\lambda U$ (this is true since all the regions $S_i$ are cones with apex at the origin). This means we do not have to restrict to policies with $\sum t_i = 1$; we can choose a policy of any total time, as long as we normalize the utility by $\sum t_i$. 
This generalizes the strategy space, but is useful for the following reason. Consider a sequence of steps $\pi$ which starts at some point $P$ (not necessarily $0$) and ends at $P$. Then if $U$ is the average utility of this cycle, then $U^* \geq U$ (in particular, we can consider any policy which goes from $0$ to $P$ and then repeats this cycle many times). Likewise, if we have a sequence of steps $\pi$ which starts at some point $P$ and ends at $\lambda P$ for some $\lambda > 1$ which achieves average utility $U$, then again $U^* \geq U$ (by considering the policy which proceeds $0 \rightarrow P \rightarrow \lambda P \rightarrow \lambda^2 P \rightarrow \dots$ (note that it is essential that $\lambda \geq 1$ to prevent this from converging back to $0$ in finite time). 

These observations motivate the following conjecture. 

\begin{conjecture}
The value $U^*$ is achieved by either:
\begin{enumerate}
    \item The average utility of a policy starting at the origin and consisting of at most $N$ steps (in distinct regions).
    \item The average utility of a path of at most $N$ steps (in distinct regions) which starts at some point $P$ and returns to $\lambda P$ for some $\lambda \geq 1$.
\end{enumerate}
\end{conjecture}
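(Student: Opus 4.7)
The plan is to reduce any near-optimal policy to one of the two claimed forms by an iterative simplification that combines pigeonhole on the $N$ regions with the problem's scale-invariance. Start from a policy $\pi = ((\alpha_1,t_1),\ldots,(\alpha_k,t_k))$ with $U(\pi) \geq U^* - \eps$, and use the combining observation to assume consecutive segments lie in distinct regions, so each intermediate $P_i$ sits on a region boundary. If $k > N$ then some region $S_j$ is revisited, say at indices $i_1 < i_2$ with $j_{i_1} = j_{i_2} = j$; form the sub-policy $\sigma$ from state $P_{i_1}$ to state $P_{i_2}$, both in the closure of $S_j$. Since $U(\pi)$ is a time-weighted convex combination of $U(\sigma)$ and the average utility of $\pi$ with $\sigma$ removed, at least one of them is $\geq U(\pi)$; this sets up a dichotomy between upgrading $\sigma$ to a scale-cycle (Case 2) and excising $\sigma$ to iterate on a strictly shorter policy (Case 1).

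To turn $\sigma$ into a genuine scale-cycle with $P_{i_2} = \lambda P_{i_1}$ and $\lambda \geq 1$, I would parameterize all perturbations of $\sigma$ that preserve its sequence of regions as a convex polytope in $(\alpha_i, t_i)$-space; the condition ``end on the ray through the start'' then carves out a lower-dimensional affine slice, and an LP extreme-point or Brouwer-type fixed-point argument should produce a feasible scale-cycle on this slice with utility no worse than $U(\sigma)$. A recursive application of the same pigeonhole dichotomy to $\sigma$ itself then reduces it to a cycle on at most $N$ distinct regions. To realize Case 1 instead, iteratively excise sub-policies whose average utility is below $U(\pi)$ until at most $N$ distinct regions remain, each visited once, from the origin; the reconnection after excision can be implemented by a short linear interpolation inside the appropriate cones, which is feasible by the hypothesis that conic combinations of the optimizer's displacement vectors span $\R^N$.

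The main obstacle will be ensuring $\lambda \geq 1$ (rather than $\lambda < 1$) in the scale-cycle extraction. The endpoints $P_{i_1}$ and $P_{i_2}$ need not be positively collinear at all, so forcing them onto a common ray requires delicate use of the freedom in the mixed strategies along $\sigma$, and the resulting geometric direction can give a contracting rather than expanding cycle; in the contracting case the iterated cycle occupies only finite total time and cannot sustain the average per-round utility across the full horizon, so one must either select the repeated-region pair strategically (e.g., the outermost pair, keeping the intermediate path inside a bounded cone), exploit a time-reversal within the same region, or route the resulting strategy into Case 1 instead. A secondary difficulty is that the ``reconnection detour'' used in the excision step may introduce additional region crossings that spoil the pigeonhole count, so a potential-function argument tracking the multiset of visited regions is likely needed to guarantee that the iteration terminates with at most $N$ distinct-region steps.
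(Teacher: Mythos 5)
The statement you are addressing is a \emph{conjecture}: the paper explicitly leaves it open (``We leave it as an interesting open problem...'') and offers no proof, so there is nothing to compare your argument against --- it can only be judged on its own terms, and as it stands it is a plan of attack rather than a proof, with gaps that you yourself partly flag. The most serious one is the excision step. When $U(\sigma) < U(\pi)$ you want to delete the sub-policy $\sigma$ between the two visits to $S_j$ and iterate on the shorter policy, but deleting $\sigma$ translates every subsequent state by $P_{i_1} - P_{i_2}$, and the regions $S_i$ are cones with apex at the origin, not translation-invariant sets. After the shift the remaining segments need not lie in the regions $S_{j_i}$ that were used to compute their utility, so the convex-combination accounting that made the truncated policy ``at least as good'' no longer describes what the learner actually plays. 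Your proposed fix --- a ``short linear interpolation inside the appropriate cones'' plus a potential function on visited regions --- is precisely where the difficulty lives, and nothing in the sketch shows the detour can be made with negligible utility loss or that the iteration terminates with at most $N$ distinct-region steps.

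The second branch of your dichotomy is equally unresolved. When $U(\sigma) \geq U(\pi)$, the endpoints $P_{i_1}, P_{i_2}$ merely lie in the closure of the same cone; forcing $P_{i_2} = \lambda P_{i_1}$ with $\lambda \geq 1$ by perturbing the $(\alpha_i, t_i)$ inside a region-preserving polytope is asserted via ``an LP extreme-point or Brouwer-type fixed-point argument,'' but no argument is given that the feasible slice is nonempty, that utility is not degraded on it, or that the resulting scale factor is expanding rather than contracting --- and you correctly note that a contracting cycle is useless since it consumes only finite time. Finally, the conjecture claims $U^*$ is \emph{achieved} by one of the two forms, whereas your construction starts from an $\eps$-suboptimal policy; even if both reductions worked, you would still need a compactness or limiting argument to pass from a family of $\eps$-approximate short policies to an exact achiever, since $U^*$ is defined as a supremum. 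In short: the structure (pigeonhole on $N$ regions, scale-invariance, cycle-vs-excision dichotomy) mirrors the heuristic observations the paper itself uses to motivate the conjecture, but none of the three load-bearing steps is established, so the conjecture remains open.
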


We leave it as an interesting open problem to compute the optimal solution to this control problem.


\section*{Acknowledgements}
\label{sec:ack}
We express our gratitude to Jason Hartline for pointing out a gap in an earlier proof of Theorem~\ref{thm:noswap}.

\clearpage

\bibliographystyle{plainnat}
\bibliography{reference} 

\newpage
\appendix
\section*{Appendix}
\subsection*{Proof of Lemma~\ref{lem:swap_dist}}

\begin{proof}[Proof of Lemma~\ref{lem:swap_dist}]
Because $f$ is convex piecewise-linear with finitely many pieces, it can be written as the maximum of finitely many affine functions:
\[
f(x)=\max_{1\le i\le m}\{\langle a_i,x\rangle-b_i\},\qquad
K=f^{-1}(0)=\{x:\langle a_i,x\rangle\le b_i,\ \forall i\}.
\]
The set $K$ is a nonempty convex compact polyhedron.

\medskip
\noindent\textbf{Step 1 (Projection and the normal cone).}
Fix $x\in\mathbb{R}^d$. If $x\in K$ the claim is trivial. 
Otherwise, let $y:=\Pi_K(x)$ be the Euclidean projection of $x$ onto $K$, and set
\[
u:=\frac{x-y}{\|x-y\|}.
\]
Note that $u\in N_K(y)$, the (Euclidean) normal cone of $K$ at $y$. 
Since $K=\{z:\langle a_i,z\rangle\le b_i\}$, we have
\[
N_K(y)=\operatorname{cone}\{\,a_i:\langle a_i,y\rangle=b_i\,\}.
\]
Let $I(y):=\{i:\langle a_i,y\rangle=b_i\}$ be the active constraint set at $y$.

\medskip
\noindent\textbf{Step 2 (Lower bound on the slope).}
For every $i\in I(y)$,
\[
\langle a_i,x\rangle-b_i
=\langle a_i,x-y\rangle
=\|x-y\|\,\langle a_i,u\rangle.
\]
Hence
\begin{equation}\label{eq:star}
f(x)=\max_{1\le i\le m}(\langle a_i,x\rangle-b_i)
\ \ge\ \max_{i\in I(y)}(\langle a_i,x\rangle-b_i)
\ =\ \|x-y\|\max_{i\in I(y)}\langle a_i,u\rangle.
\end{equation}

Define
\[
\sigma:=\inf\Big\{\max_{i\in I(y)}\langle a_i,u\rangle:\ 
y\in K,\ u\in N_K(y),\ \|u\|=1\Big\}.
\]
We claim $\sigma>0$.
Indeed, for each $y\in K$, $N_K(y)=\operatorname{cone}\{a_i:i\in I(y)\}$, so any nonzero $u\in N_K(y)$
is a nonnegative combination of those $a_i$, and therefore at least one inner product $\langle a_i,u\rangle>0$. 
The set
\[
\mathcal{S}=\{(y,u):y\in K,\ u\in N_K(y),\ \|u\|=1\}
\]
is compact (since $K$ is compact and only finitely many active sets can occur). 
The map $(y,u)\mapsto \max_{i\in I(y)}\langle a_i,u\rangle$ is continuous on $\mathcal{S}$ 
and strictly positive there. 
A continuous strictly positive function on a compact set has a positive minimum, hence $\sigma>0$.

\medskip
\noindent\textbf{Step 3 (Conclusion).}
From \eqref{eq:star} and the definition of $\sigma$,
\[
f(x)\ \ge\ \sigma\,\|x-y\|\ =\ \sigma\,\operatorname{dist}(x,K).
\]
Thus
\[
\operatorname{dist}(x,f^{-1}(0))=\operatorname{dist}(x,K)
\le \frac{1}{\sigma}\,f(x)
\qquad\forall x\in\mathbb{R}^d.
\]
Setting $C_f:=1/\sigma$ gives the desired constant. 
\end{proof}

\subsection*{Proof of Theorem \ref{thm:control}}

\begin{proof}[Proof of Theorem \ref{thm:control}]

\textbf{Part 1: } Let $\pi = ((\alpha_1, t_1), (\alpha_2, t_2), \dots, (\alpha_k, t_k))$ be a strategy for the control problem which satisfies $U(\pi) \geq U^{*} - 0.5\eps$. As suggested by $\pi$, we will consider the strategy of the \fp where for each $i$ (in order), the \fp plays mixed strategy $\alpha_i$ for $t_iT$ rounds. We will show that this strategy guarantees an expected utility of $(U^* - \eps)T - o(T)$ for the \fp.

Since the learner is mean-based, they are playing a $\gamma$-mean-based algorithm for some $\gamma = o(1)$. As in Definition \ref{def:mean-based}, let $\sigma_{j, t}$ be the learner's cumulative utility from playing action $b_j$ for rounds $1$ through $t$. For $0 \leq i \leq k$, let $\tau_i = \sum_{j=1}^{i} t_i$ (with $T_0 = 0$). For $\tau \in [0, 1]$, let $P(\tau)$ be the state of the control problem at time $\tau$ (linearly interpolating between $P_i$ and $P_{i+1}$ if $\tau_i \leq \tau \leq \tau_{i+1}$); note that $P(\tau_i) = P_i$. We will first show that with high probability, $|\sigma_{j, \tau T} - TP(\tau)_j| \leq o(T)$; in other words, $P(\tau)$ provides a good approximation of the true cumulative utilities of the learner in the repeated game.

To see this, we first claim $|\E[\sigma_{j, \tau T}] - TP(\tau)_j| \leq k$. Fix any round $t$ in $[\tau_iT, \tau_{i+1}T]$; this means that the \fp plays strategy $\alpha_i$ during round $t$, and therefore that $\E[\sigma_{j, t+1} - \sigma_{j, t}] = u_L(\alpha, b_j)$. If $t+1$ also belongs to $[\tau_iT, \tau_{i+1}T]$ (so $t/T$ and $(t+1)/T$ both belong to $[\tau_i, \tau_{i+1}]$), we also have that $T(P(\frac{t+1}{T})_j - P(\frac{t}{T})_j) = u_L(\alpha, b_1)$. Since there are only $k$ intervals, $t$ and $t+1$ belong to the same interval for all but $k$ rounds, and since utilities are bounded by $1$ it follows that $|\E[\sigma_{j, \tau T}] - TP(\tau)_j| \leq k$. Now, we also claim that with high probability (at least $1 - 1/T$), for all $t$, $|\E[\sigma_{j, t}] - \sigma_{j, t}| \leq 10\sqrt{T\log(TN)}$. This follows simply from Hoeffding's inequality, since each component of $\sigma_{j, t}$ is the sum of $t$ independent random variables bounded in $[-1, 1]$. Together, this implies that $|\E[\sigma_{j, \tau T}] - TP(\tau)_j| \leq o(T)$.

We now claim that for sufficiently large $T$, the learner will play action $j_i$ for rounds $t \in [\tau_iT, \tau_{i+1}T]$. To see this, recall that $S_{j_i}$ is the unique region containing both $P_i$ and $P_{i+1}$. Since regions are convex with disjoint interiors, this means that the segment connecting $P_i$ and $P_{i+1}$ lies in the interior of $S_{j_i}$. By the definition of $S_{j_i}$, this implies that there exists some $\delta > 0$ such that for at least $1 - 0.5\eps$ fraction of $\tau$ in the interval $[\tau_i, \tau_{i+1}]$, $P(\tau)$ satisfies $P(\tau)_{j_i} - P(\tau)_{j} \geq \delta$ for all $j \neq j_i$. Since $|\E[\sigma_{j, \tau T}] - TP(\tau)_j| \leq o(T)$ for all $j$, this means that for at least a $1 - 0.5\eps$ fraction of rounds $t$ in $[\tau_iT, \tau_{i+1}T]$, we have that $\sigma_{j_i, \tau T} - \sigma_{j, \tau T} \geq \delta T - o(T)$. For sufficiently large $T$, this is bigger than $\gamma T$ (which is also $o(T)$). 

Therefore, for each $i$, for at least $(1 - 0.5)\eps t_iT$ rounds, the \fp plays the mixed strategy $\alpha$ and the learner plays action $b_{j_i}$. The \fps total expected utility is therefore at least

$$\sum_{i=1}^{k} (1-0.5\eps) t_iT u_{O}(\alpha_i, b_{j_i}) = (1-0.5\eps)U(\pi)T \geq (1-\eps)U^{*}T.$$

\textbf{Part 2: } 

Assume there exists such a family (one for each $T$) of non-adaptive strategies $(\alpha^1, \alpha^2, \dots, \alpha^T)$ for the \fp. Since this strategy must work against any mean-based learner, we will construct a bad mean-based learner for this strategy in the following way. Fix $\gamma = T^{-1/2}$ (any $\gamma > 2/T$ will work). At any time $t$, let $J_t = \{ b_j | \max_{i} \sigma_{i, t} - \sigma_{j,t} < \gamma T\}$ be the set of actions for the learner whose historical performance are within $\gamma T$ of the optimally performing action. The mean-based property requires the learner to play an action in $J_t$ with probability at least $1 - K\gamma$. Our mean-based learner will choose the \textit{worst} action in $J_t$ for the \fp; that is, the action $b_j \in J_t$ which minimizes $u_{O}(\alpha^t, b_j)$. 

Now, choose a sufficiently large $T_0$ such that this strategy achieves average utility at least $(U^* + 0.5\eps)$ for the \fp against this mean-based learner. We now claim we can construct a solution $\pi$ to the control problem with $k=T_0$ which satisfies $U(\pi) \geq U^* + 0.5\eps$, contradicting the optimality of $U^*$. Consider the protocol $\pi = ((\alpha^1, 1/T_0), (\alpha^2, 1/T_0), \dots, (\alpha^{T_0}, 1/T_0))$. This is not a proper protocol, since some of the steps of this protocol might start in one region $S_{j}$ and end in a different region $S_{j'}$, but for any such steps we can divide them into substeps per region as described earlier. 

We now claim that the step $(\alpha^{t}, 1/T_0)$ only passes through regions in the set $J_t$. To see this, note that $P_{t}$ and $P_{t+1}$ differ in each coordinate by at most $1/T_0$ (since all utilities are bounded by $1$). Therefore if the segment between $P_t$ and $P_{t+1}$ passes through a point on the boundary $S_{j} \cap S_{j'}$ (where $u_j = u_j' = \max_i u_i$), it must be the case that $(P_t)_j$ and $(P_t)_{j'}$ are both within $2/T_0$ of $\max_{j}(P_t)_j$. By construction $(P_t)_j = \frac{1}{T_0}\sigma_{j, t}$, so this implies that $\max_{i} \sigma_{i, t} - \sigma_{j, t} \leq 2 \leq \gamma T$, and therefore $j \in J_t$ (similarly, $j' \in J_t$).

Now, if the step $(\alpha^{t}, 1/T_0)$ only passes through regions in the set $J_t$, it obtains utility for the \fp at least $\min_{b_j \in J_t} \frac{1}{T_0} u_{O}(\alpha^t, b_j)$, and thus

$$U(\pi) = \frac{1}{T_0}\sum_t \min_{b_j \in J_t} u_{O}(\alpha^t, b_j).$$

But this sum is exactly the utility of the \fp against our mean-based learner, which is at least $(U^* + 0.5 \eps)T_0$. It follows that $U(\pi) \geq U^* + 0.5\eps$, contradicting that $U^*$ is optimal.

\end{proof}

\end{document}